\theoremstyle{plain}
\newtheorem{myTheo}{Theorem}
\theoremstyle{definition}
\theoremstyle{corollary}
\newtheorem{myCor}{Corollary} 
\theoremstyle{remark}
\newtheorem{myRemark}{Remark}
\newtheorem{myAssume}{Assumption}
\newcommand{\B}[1]{\mathbb{#1}}
\newcommand{\Rmnum}[1]{\expandafter\@slowromancap\romannumeral #1@}
\begin{document}

\title{Testing Scenario Library Generation for Connected and Automated Vehicles, Part I: Methodology}

\author{Shuo~Feng,
	Yiheng~Feng,
	Chunhui~Yu,
	Yi~Zhang,~\IEEEmembership{Member,~IEEE},
	and Henry~X.~Liu,~\IEEEmembership{Member,~IEEE}
	
	\thanks{This work was supported by USDOT Center for Connected and Automated Transportation at the University of Michigan, Ann Arbor. \emph{(Corresponding author: Henry X. Liu)}}
	
	\thanks{S. Feng, Y. Zhang are with the Department of Automation, Tsinghua University, Beijing 100084, China. S. Feng is also a visiting Ph.D. student of the Department of Civil and Environmental Engineering, University of Michigan, Ann Arbor, MI, 48109, USA. (e-mail: s-feng14@mails.tsinghua.edu.cn; zhyi@tsinghua.edu.cn)}
	
	\thanks{Y. Feng is with the University of Michigan Transportation Research Institude, 2901 Baxer Rd, Ann Arbor, MI, 48109, USA. (e-mail: yhfeng@umich.edu)}
	
	\thanks{C. Yu is with the Key Laboratory of Road and Traffic Engineering of the Ministry of Education, Tongji University, 4800 Cao'an Road, Shanghai, China. (e-mail: 13ych@tongji.edu.cn)}
	
	\thanks{H. X. Liu is with the Department of Civil and Environmental Engineering, University of Michigan, Ann Arbor, MI, 48109, USA. (e-mail: henryliu@umich.edu)}
	
}

\markboth{To appear in IEEE Transactions on Intelligent Transaportation Systems}
{Shell \MakeLowercase{\textit{et al.}}: Bare Demo of IEEEtran.cls for IEEE Journals}

\maketitle

\begin{abstract}
Testing and evaluation is a critical step in the development and deployment of connected and automated vehicles (CAVs), and yet there is no systematic framework to generate testing scenario library. This study aims to provide a general framework for the testing scenario library generation (TSLG) problem with different operational design domains (ODDs), CAV models, and performance metrics.  Given an ODD, the testing scenario library is defined as a critical set of scenarios that can be used for CAV test. Each testing scenario is evaluated by  a newly proposed measure, scenario criticality, which can be computed as a combination of maneuver challenge and exposure frequency. To search for critical scenarios, an auxiliary objective function is designed,  and a multi-start optimization method along with seed-filling is applied. Theoretical analysis suggests that the proposed framework can obtain accurate evaluation results with much fewer number of tests, if compared with the on-road test method. In part II of the study, three case studies are investigated to demonstrate the proposed method.  Reinforcement learning based technique is applied to enhance the searching method under high-dimensional scenarios.
\end{abstract}

\begin{IEEEkeywords}
Connected and Automated Vehicles,  Testing Scenario Library, Safety, Functionality
\end{IEEEkeywords}

\IEEEpeerreviewmaketitle

\section{Introduction}

\IEEEPARstart{T}{esting} and evaluation is a critical step in the development and deployment of connected and automated vehicles (CAVs). Testing procedures for human-driven vehicles, such as Federal Motor Vehicle Safety Standards (FMVSS), have been established for a long time. However, current standards only regulate automobile safety-related components, systems, and design features, without consideration of driver performance in completing driving tasks. For CAVs, it is essential to evaluate the ``intelligence'' of the vehicle \cite{li2018artificial}, similar to a driver's license test, which indicates whether a CAV can operate safely and efficiently without human intervention.

Currently, CAV testing and evaluation is mainly conducted via the following methods: simulation test, closed facility test, and on-road test \cite{thorn2018framework}. All three methods have pros and cons. Simulation test is cost-effective, but it is difficult to model exact vehicle dynamics and road environment. On-road test is most realistic, but it is extremely inefficient. A CAV would have to drive hundreds of millions of miles to validate the safety at the level of human-driven vehicles \cite{kalra2016driving}. The underlying reason is that most on-road scenarios are not challenging enough to evaluate the performances of a CAV. For instance, if we want to evaluate the safety performance (e.g., accident rate) of a CAV by analyzing its reaction to red light running vehicles at signalized intersections, it may require the CAV to pass thousands of intersections to accumulate enough accident events, which becomes intractable.

Closed facility test has its unique advantages over the other two methods. It does not require the detailed modeling of vehicle dynamics, which is a must in simulation.   It also provides a more controlled and therefore safer environment for CAV testing than the on-road test method. Moreover, the closed facility test has potential to greatly improve the testing efficiency, i.e., obtain the evaluation results with the same accuracy with fewer number of tests. We should note that, despite the advantages of simulation test and closed facility test, on-road testing is still irreplaceable before deployment. With properly designed scenarios in the simulation and the closed test facility, however, the effort for on-road test can be reduced.

Therefore, the key to exploiting the advantages of either simulation or closed facility test is to generate a testing scenario library for each operational design domain (ODD). The ODD is defined as operation conditions under which a given driving automation system is specifically designed to function \cite{J3016_201806}. Given an ODD, there can exist millions of scenarios with different parameters (e.g., different behaviors of the background vehicles (BVs)).  A testing library is defined as a subset of scenarios that can be used for evaluation of certain pre-defined performance metrics (e.g., safety). Since the library includes more critical scenarios,  testing in a closed facility is usually much more efficient than that on public roads.

In the past few years, increasing research efforts have been made to solve the testing scenario library generation (TSLG)  problem. Generally speaking, the TSLG problem can be disassembled into four closely related research questions: (1)	How to parameterize a testing scenario and define the decision variables? (\emph{Scenario Description}) (2) What are the performance metrics for CAV evaluation? (\emph{Metric Design})	(3)	How to generate a testing scenario library for a specific performance metric? (\emph{Library Generation}) (4) How to evaluate CAVs with the generated library? (\emph{CAV Evaluation}) A brief overview of existing studies will be provided in Section II. To the best of our knowledge, all existing methods have limitations in either ODD types that can be handled (e.g., low-dimensional scenarios only), CAV models (e.g., a specific CAV only), or performance metrics (e.g., safety evaluation only). 

In this paper, a unified framework for TSLG is proposed, as shown in Fig. \ref{fig_Framework}. The four research questions are integrated and solved together in the framework: (1) Decision variables of a scenario are formulated by scenario parameterization considering ODD (Section III.A). (2) Incremental performance metrics are designed, including safety, functionality, mobility, and rider's comfort (Section III.B).  (3) A method is proposed to generate the testing scenario library, including a new criticality definition and an optimization-based searching method for critical scenarios (Section IV). (4) With the generated library, CAVs are evaluated by scenario sampling, CAV testing, and index estimation (Section V). 

\begin{figure*}
	\centering
	\includegraphics[width=0.95\textwidth]{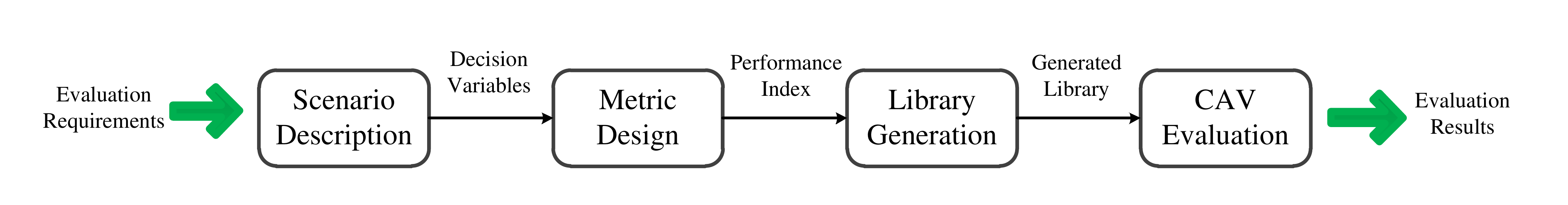}
	\caption{An illustration of the proposed framework to the TSLG problem.} 
	\label{fig_Framework}
\end{figure*}

The library generation is the key step in the entire framework (Section IV). The basic idea is to define the criticality of scenarios and search the set of critical scenarios to construct the library. To evaluate the importance of a scenario, a new definition of criticality is proposed as a combination of maneuver challenge and exposure frequency, as scenarios with higher occurrence probability in the real-world and higher maneuver challenge should have higher priority for CAV evaluation. The maneuver challenge is estimated by a surrogate model of CAVs, whereas the exposure frequency is calculated based on naturalistic driving data. The new definition is fundamentally different from most existing studies, which usually overvalue worst-case scenarios \cite{jung2007worst}\cite{zhao2018accelerated}.  In order to reduce the computational complexity in the process of searching for critical scenarios, an auxiliary objective function is designed to guide the searching direction,  and the seed-fill method is applied to search neighborhood scenarios.

Theoretical analysis in Section VI provides justifications of the proposed method for both evaluation accuracy and efficiency. Specifically, the proposed method obtains unbiased index estimation of performance metrics (i.e., accuracy), and the estimation variance is zero under certain conditions (i.e., efficiency). Based on the theoretical analysis, hyper-parameters (i.e., the threshold of critical scenarios and parameters of sampling policy) can be determined.

This study is divided into two parts. Overall framework, methodologies, and theoretical analysis are presented in this paper. In Part II paper  \cite{feng2019testing}, three case studies are investigated to demonstrate the proposed methodologies. 


\section{Related Work}
\label{sec_Prob}
In this section, a brief overview of related work is provided from the perspectives of the four research areas, i.e., scenario description, metric design, library generation, and CAV evaluation. 

Scenario description focuses on the parameterization of testing scenarios and definition of decision variables. A scenario describes the temporal development among a sequence of scenes, which include snapshots of  the environment (e.g., background vehicles,  road information, and environment conditions) \cite{ulbrich2015defining}. Decision variables in most existing studies are defined by listing all possible influencing factors, which is intractable when the testing scenarios are complex. To reduce the complexity, Li et al. \cite{li2016intelligence}\cite{li2019parallel} decomposed testing scenarios as a series of pre-determined driving tasks, which can be specified by a group of spatial-temporal attributes. Zhou et al. \cite{zhou2017reduced} described a complex testing scenario (e.g., overtaking) by several basic scenarios (e.g., car-following and lane changing) and a set of transition rules.  The PEGASUS project \cite{PEGASUS} proposed a three-level framework to describe testing scenarios, i.e., functional level, logical level, and concrete level. If parameters of the top two levels are pre-determined, then the decision variables include only the parameters of the concrete level.

For performance metrics and related indices for CAV evaluation, most current studies focus on safety only, which is usually assessed by the disengagement rate or the accident rate \cite{Waymo}\cite{zhao2017accelerated}. Although safety is the foundation of all CAV applications, a safe but over-conservative CAV may fail in simple driving tasks. Therefore, functionality, which represents the vehicle's ability to complete driving tasks, should also be included in the evaluation process. Furthermore, mobility and rider's comfort can be considered as higher level requirements. To better evaluate the metrics of CAVs, quantitative indices are desirable. Most existing studies, however, can only obtain qualitative assessment, e.g., ISO 26262 \cite{iso26262} provides four safety integrity levels from A to D. 

Testing scenario library generation is key to CAV test. The most straightforward method is to design a ``test matrix'' based on crash data analysis \cite{najm2013depiction}\cite{carsten2005human}\cite{karabatsou2007priori}, naturalistic driving data (NDD) analysis \cite{roesener2016scenario}\cite{kruber2018unsupervised}, and scenario randomization \cite{khastgir2017test}, as well as similarity analysis \cite{arcuri2010black}\cite{hemmati2010reducing}\cite{shin2010normalized} and coverage analysis \cite{hemmati2010enhanced}\cite{hemmati2010industrial}, which were developed for software verification and validation. However, as the test matrix is pre-determined, a CAV can be specifically trained to achieve good performance in the test, which is problematic for CAV evaluation. Toward addressing this issue,  the worst-case scenario evaluation (WCSE) method was developed with the knowledge of exact CAV dynamics and driving behaviors, which is usually intractable \cite{jung2007worst}. To avoid this problem, a black-box searching method was used to identify testing scenarios by adaptively testing a particular CAV \cite{mullins2018adaptive}. Both the WCSE and black-box searching methods can be used to generate scenarios for a particular CAV only. To construct testing scenarios for generic CAVs, the PEGASUS project \cite{PEGASUS} numerically measured the ``risk'' of all feasible scenarios that was defined by ISO 26262 and selected the risky testing scenarios. 
However, testing scenarios generated using the abovementioned methods may not reflect real-world driving conditions, therefore test results with these scenarios may not represent a CAV's true performance.

For CAV evaluation, most existing methods estimate the accident rate of a CAV using scenarios from NDD, such as naturalistic field operational tests \cite{festa2008festa} and crude Monte Carlo method \cite{yang2010development}\cite{lee2004longitudinal}\cite{de2017assessment}. However, this method is proved inefficient and intractable for even low-dimensional scenarios \cite{kalra2016driving}. To address this problem, Zhao et al. \cite{zhao2017accelerated} introduced importance sampling techniques. Instead of sampling testing scenarios from NDD, an importance function was constructed when conducting CAV testing. However, construction of the importance function remains challenging. The cross entropy method applied in \cite{zhao2017accelerated} was based on adaptively testing a particular CAV, which requires prohibitively huge number of CAV testing for high-dimensional scenarios. As a result, under high-dimensional car-following scenarios \cite{zhao2018accelerated}, the cross entropy method was replaced by a white-box optimization method with the assumption of exact CAV models, which is a huge limitation.

Notwithstanding the related studies, all existing methods have limitations in either ODD types that can be handled (e.g., low-dimensional scenarios only), CAV models (e.g., a specific CAV only), or performance metrics (e.g., safety evaluation only). To the best of our knowledge, no existing studies has integrated all parts of the TSLG problem together and are cable of generating libraries for different scenario types, CAV types, and performance metrics.

\section{Problem Formulation}
In this section, the TSLG problem is analyzed and formulated. In Subsection III.A, decision variables of a testing scenario are defined. Performance metrics for a CAV test, including safety, functionality, mobility, and rider's comfort, are described in Subsection III.B. To quantitatively measure the metrics, the occurring probability of the event of interest is used as the performance indices and described in Subsection III.C. To improve the efficiency of performance index estimation, importance sampling techniques are also introduced in this subsection. As shown in Subsection III.D,  the generation of the testing scenario library is equivalent to the construction of the importance function. Finally, the assumptions made in a CAV test are provided in Subsection III.E. Notations of variables are listed in Table \ref{tab_notation}.

\linespread{1.2}
\begin{table}
	\centering
	\footnotesize
	\setlength{\abovecaptionskip}{1pt}
	\setlength{\belowcaptionskip}{3pt}	
	\caption{Notations of the variables in this paper. }
	\label{tab_notation}
	\begin{tabular}{m{1.5cm}<{\centering}p{6.5cm}}
		\hline
		\multicolumn{1}{c}{\bfseries Variables } &   \multicolumn{1}{c}{ \bfseries Notations}   \\ \hline
		$\theta$  & \multicolumn{1}{m{6.5cm}}{Pre-determined parameters of testing scenarios by operational design domain.}\\ \hline
		$x$  & \multicolumn{1}{m{6.5cm}}{Decision variables of testing scenarios.}\\ \hline
		$A$ & \multicolumn{1}{m{6.5cm}}{Event of interest (e.g., accident) with a CAV model.}\\ \hline
		$S$ & \multicolumn{1}{m{6.5cm}}{Event of interest (e.g., accident) with a surrogate model.}\\ \hline
		$\B{X}$ & \multicolumn{1}{m{6.5cm}}{Feasible set of decision variables.} \\\hline
		$\Phi$ & \multicolumn{1}{m{6.5cm}}{Critical set of decision variables.} \\\hline
		$\gamma$ & \multicolumn{1}{m{6.5cm}}{Criticality threshold of critical scenarios.} \\\hline
		$q(x)$ & \multicolumn{1}{m{6.5cm}}{Importance function.} \\\hline
		$P(S|x, \theta)$ & \multicolumn{1}{m{6.5cm}}{Probability of event $S$ in scenario $(x, \theta)$, i.e., maneuver challenge.} \\\hline
		$P(x|\theta)$ & \multicolumn{1}{m{6.5cm}}{Occurring probability of scenario $(x, \theta)$ on-road, i.e., exposure frequency.} \\\hline
		$V(x|\theta)$ & \multicolumn{1}{m{6.5cm}}{Criticality value of scenario $(x, \theta)$.} \\ \hline
		$N(\B{X}), N(\Phi)$ & \multicolumn{1}{m{6.5cm}}{Total number of  scenarios in the set $\B{X}$, $\Phi$.} \\ \hline
		$\bar{P}(x| \theta)$ &\multicolumn{1}{m{6.5cm}}{Testing probability of scenario $(x, \theta)$.} \\ \hline
		$\bar{P}_1(x| \theta)$ & \multicolumn{1}{m{6.5cm}}{$\bar{P}(x| \theta)$ for greedy sampling policy.} \\\hline
		$\bar{P}_2(x| \theta)$ & \multicolumn{1}{m{6.5cm}}{$\bar{P}(x| \theta)$ for $\epsilon$-greedy sampling policy.} \\\hline
		$\epsilon$ & \multicolumn{1}{m{6.5cm}}{Exploration probability of $\epsilon$-greedy sampling policy.} \\\hline
		$\hat{P}(A|\theta)$ &\multicolumn{1}{m{6.5cm}}{Estimated probability of the event $A$ with pre-determined parameters $\theta$.} \\ \hline 
		$n$ &\multicolumn{1}{m{6.5cm}}{Total number of sampled testing scenarios.} \\ \hline
		$J(x)$ & \multicolumn{1}{m{6.5cm}}{Auxiliary objective function.} \\\hline
		$mnpETTC$ & \multicolumn{1}{m{6.5cm}}{Minimal normalized positive enhanced time-to-collision during testing.} \\\hline
		$R, \dot{R}$ & \multicolumn{1}{m{6.5cm}}{Range and range rate at the cut-in moment between the background vehicle and test CAV.} \\\hline
		$R(t), \dot{R}(t)$ & \multicolumn{1}{m{6.5cm}}{Range and range rate at time $t$ between the background vehicle and test CAV.} \\\hline
		$\omega$ & \multicolumn{1}{m{6.5cm}}{Weight parameter.} \\\hline
		$d(x, \Omega)$ & \multicolumn{1}{m{6.5cm}}{Normalized distance between scenario $x$ and a high exposure frequency zone $\Omega$.} \\\hline
		$W$ & \multicolumn{1}{m{6.5cm}}{Normalization factor.} \\\hline
		$f_A(x)$ & \multicolumn{1}{m{6.5cm}}{Probability of event $A$ in scenario $(x,\theta)$, i.e., $P(A|x,\theta)$.} \\\hline
		$f_S(x)$ & \multicolumn{1}{m{6.5cm}}{Probability of event $S$ in scenario $(x,\theta)$, i.e., $P(S|x,\theta)$.} \\\hline
	\end{tabular}
\end{table}
\linespread{1.0}

\subsection{Decision Variables}
The terms scene and scenario defined in \cite{ulbrich2015defining} are adopted. A scene describes a snapshot of the environment including the scenery and dynamic elements. Scenery includes all geo-spatially stationary elements, which entails metric, semantic, topological, and categorical information about roads and all the subcomponents such as lanes, lane markings, and road surface types. Dynamic elements are those moving or have the ability to move, e.g., pedestrians and vehicles. A scenario describes the temporal development in a sequence of scenes. 

Testing scenarios should be consistent with the ODD \cite{J3016_201806}. Usually, for a given testing scenario, most of its stationary elements are specified by the ODD. ODD also provide constraints for the dynamic elements of a testing scenario. 
In this paper, the parameters determined by the ODD are denoted as $\theta$, e.g., number of lanes, road type, weather conditions, \emph{etc}.
Then the remaining parameters (e.g., behaviors of background vehicles (BVs)) are denoted as a vector of decision variables as
\begin{eqnarray}
x = \left[ x(1), x(2), \cdots, x(d) \right],
\end{eqnarray}
where $d$ denotes the dimensionality of $x$. The feasible set of $x$, i.e., $\B{X}$, is determined by the ODD, e.g., speed range, acceleration range, and perception range. The main task of the TSLG problem is to determine a critical subset $\Phi$ of  $\B{X}$ (i.e., $\Phi \subset \B{X}$), which can be used for CAV evaluation.

Taking the cut-in scenario as an example, the decision variables can be formulated as
\begin{eqnarray}
x = \left[R, \dot{R}\right], x \in \B{X}
\end{eqnarray}
where $R$ and $\dot{R}$ denote the range (i.e., relative distance) and range rate (i.e., relative speed, assuming the speed of the CAV is given) between the BV and the test CAV at the cut-in moment  \cite{PEGASUS}\cite{zhao2017accelerated}. The feasible set $\B{X}$ (i.e., range limit and range rate limit) and the constant parameters $\theta$ (e.g., number of lanes and road type) are determined by the ODD.

\subsection{Performance Metrics}
Performance metrics define what aspects a CAV needs to be evaluated. Most existing studies focus only on safety evaluation, which is necessary but insufficient for a deployable CAV. In this paper, we define the performance metrics to reflect people's incremental expectations towards CAVs, including safety, functionality, mobility, and rider's comfort.

Safety is the foundation of all CAV applications, which is usually assessed by the disengagement rate or the accident rate without human intervention \cite{Waymo}\cite{zhao2017accelerated}. Again, taking the cut-in scenario as an example, a BV changes its lane in front of a CAV in the adjacent lane with a specified realization of decision variables, i.e., cut-in distance and speed difference. Whether an accident (e.g., conflict or crash) happens or not depends on the CAV's response to the BV's action. After a certain number of tests with varying realizations of decision variables, the accident rate of the CAV could be estimated, which is used to indicate the safety performance in the cut-in scenario. 

Functionality is another important performance metric, which is defined by whether a CAV can complete a given task in a specific scenario. Consider a scenario that a CAV needs to make a lane change to the right and exit the highway within a certain distance, with several BVs driving on the right lane. If the CAV is very conservative and keeps a long safety distance with surrounding vehicles, it may fail to complete the lane-change task before the freeway exit. In such case, the vehicle may pass the safety evaluation but fail in the functionality evaluation.
Similar to safety evaluation, the functionality of a CAV can be evaluated by the failure rates of the CAV in completing certain driving tasks with different environment settings and BVs' trajectories.

We believe both safety and functionality are critical for CAV evaluation at the current technology maturity level. Unless a CAV can safely complete all driving tasks without human interventions, it may not be accepted by the general public.

For higher level requirements, mobility and rider's comfort should also be considered into the evaluation scope. Mobility is utilized to measure the travel efficiency in completing a series of driving tasks, while rider's comfort measures the physical and psychological feeling of passengers. Case studies of these two metrics will be investigated in future work.

\subsection{Performance Index Estimation}
Quantitative indices are designed to measure the performance metrics, e.g., the accident rate for safety performance and the failure rate for functionality performance. Here we denote the event of interest (e.g., accident) as $A$, and the occurrence probability of $A$ (e.g., accident rate) in the ODD is denoted as $P(A|\theta)$. 

In essence, on-road test is to estimate the performance indices of a CAV driving in the real world. For the cut-in example, if a test CAV drives on-road, experiences $n$ cut-in scenarios, and has $m$ accident events, the accident rate can be estimated by
\begin{eqnarray}
P(A|\theta) \approx \frac{m}{n}.
\end{eqnarray}
The theoretical justification is provided as follow. Assuming that the experienced cut-in scenarios follow the distribution of $P(x|\theta)$, i.e., $x_i \sim P(x|\theta)$, $i=1, \cdots, n$,  we can estimate the index as
\begin{eqnarray}
P(A|\theta) &&= \sum_{x\in \B{X}} P(A|x,\theta) P(x|\theta), \nonumber \\
&&\approx \frac{1}{n} \sum_{i=1}^{n} P(A|x_i,\theta), x_i \sim P(x|\theta),\\
&&\approx \frac{m}{n}, \nonumber
\end{eqnarray}
where the last two equivalences are derived by Monte Carlo theory  \cite{hammersley1964general}. As proved in \cite{kalra2016driving}, however, because the accident is a rare event, the required number of tests $n$ is intolerably large for reasonable estimation accuracy.

To improve the estimation efficiency, the importance sampling technique was introduced by \cite{zhao2017accelerated}. If an importance function $q(x)$ is properly constructed as
\begin{eqnarray}
&q(x) \in [0,1], \sum_{x\in\B{X}} q(x) = 1, P(x|\theta)>0 \Rightarrow q(x)>0,
\end{eqnarray}
and testing scenarios are sampled via the importance function, the index could be estimated by
\begin{eqnarray}
\label{eq_Index}
P(A|\theta) &&= \sum_{x\in \B{X}} P(A|x,\theta) P(x|\theta), \nonumber \\
&&= \sum_{x\in \B{X}} \frac{P(A|x,\theta) P(x|\theta)}{q(x)}q(x),  \\
&&\approx \frac{1}{n} \sum_{i=1}^{n} \frac{P(x_i|\theta)}{q(x_i)}P(A|x_i,\theta), x_i \sim q(x). \nonumber 
\end{eqnarray}
If the importance function $q(x)$ can assign higher probability for  critical scenarios, then more critical scenarios will be chosen during the test process. As a result, the required number of tests can be reduced, i.e., the evaluation method becomes more efficient. Zhao et al. \cite{zhao2017accelerated} has shown that a properly constructed importance function would significantly improve the safety evaluation efficiency for a low-dimensional scenario. For complex scenarios, however, the construction of a proper importance function still remains a problem.

\subsection{Objective of Testing Scenario Library Generation}
The objective of generating a testing scenario library is to properly construct the importance function $q(x)$, which can improve the estimation efficiency of Eq. (\ref{eq_Index}). If we can properly assign an importance value to each scenario, then those scenarios with importance value exceeding a threshold will be included in the testing scenario library. In this paper, the importance of a scenario is defined as a criticality measure, which is introduced in the next section.  

\subsection{Assumptions Made for TSLG}
The following assumptions are generally applied in the CAV tests, and both of them are mild. 
\begin{myAssume}
	\label{Assume_Rare}
	Testing CAVs are well-developed so that the event of interest $A$ is a rare event on-road. 
\end{myAssume}
\begin{myAssume}
	\label{Assume_Common}
	Testing CAVs share some ``generic features'' of behaviors.
\end{myAssume}

 Different types of CAVs may have generic features as well as unique features brought by their own manufacturers. The generic features capture fundamental functions of a well-developed vehicle behavior, e.g., keep safe distances and interact safely with surrounding vehicles. Similar to human drivers, where different drivers have different driving habits, generic features exist among all drivers.

\section{Testing Scenario Library Generation}
As we discussed above, the key to the testing scenario library generation (TSLG) problem is to compute the criticality value for each scenario. In this paper, a new criticality definition is proposed as a combination of exposure frequency and maneuver challenge. The exposure frequency can be estimated by using naturalistic driving data (NDD). To measure the maneuver challenge, a surrogate model (SM) of CAV is constructed. To reduce the computational complexity, an optimization-based method is applied to search for critical scenarios. An illustration of the proposed method for TSLG is shown in Fig. \ref{fig_LGMethod}.

\begin{figure}[h!]
	\centering
	\includegraphics[width=0.5\textwidth]{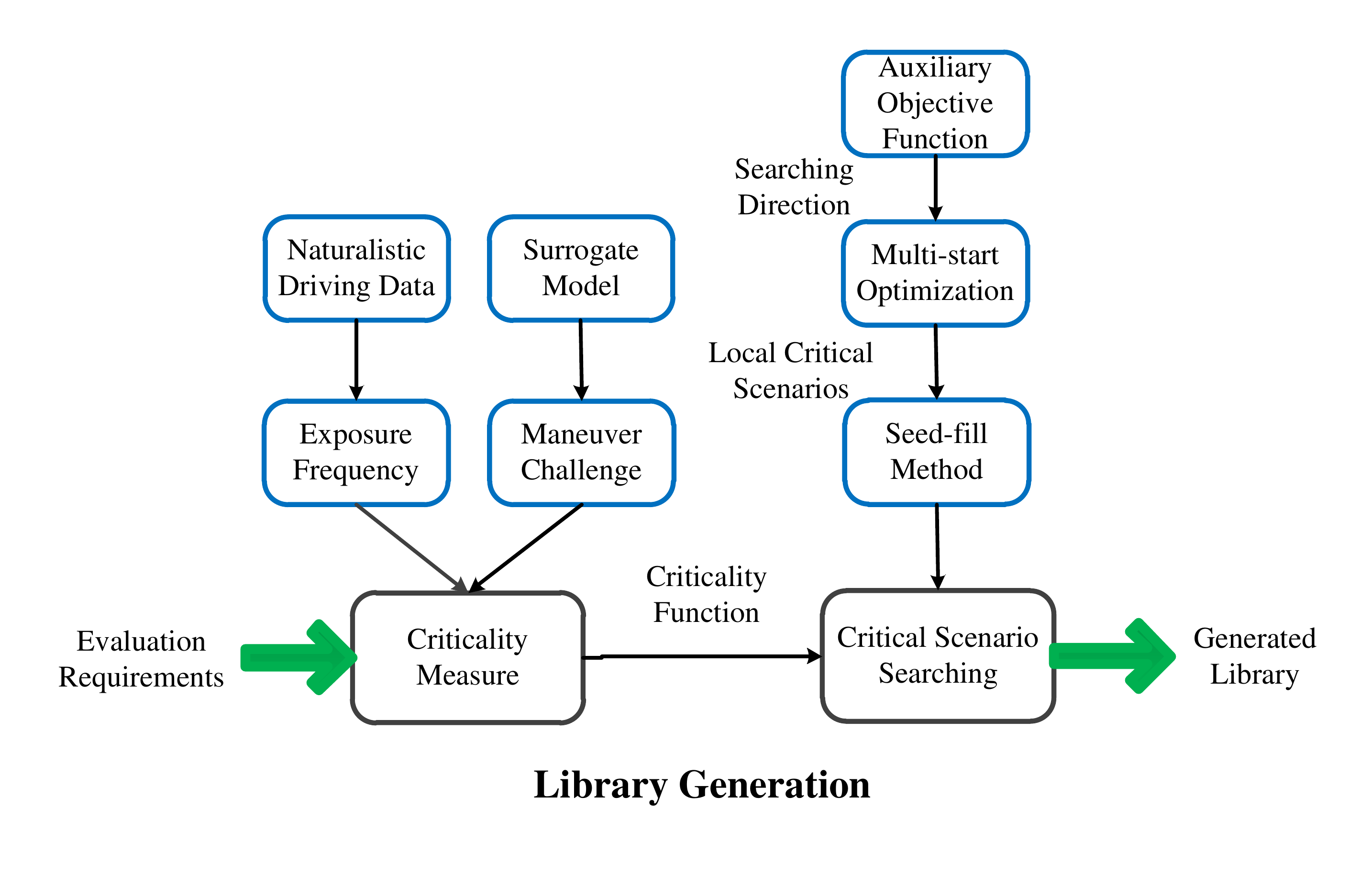}
	\caption{An illustration of the proposed method for TSLG.} 
	\label{fig_LGMethod}
\end{figure}

\subsection{Definition of Criticality }
\label{sec_lib_three}
The criticality of a scenario measures its importance in the evaluation of a performance metric. In ISO 26262 \cite{iso26262}, the risk assessment of a scenario was defined  as a combination of severity of injuries, exposure classification, and controllability classification. The exposure classification denotes the relative expected exposure frequency of the scenario where the injury can possibly happen. The controllability classification denotes the relative likelihood that the driver can act to prevent the injury.

Inspired by the concepts of the risk assessment, we define the criticality of scenarios as
\begin{eqnarray}
\label{eq_Value}
V(x|\theta) \overset{\rm def}{=} P(S|x, \theta) P(x|\theta), 
\end{eqnarray}
where $S$ denotes the event of interest (e.g., accident) with a SM of CAV. The reason for the introduction of the SM is that, for the purpose of TSLG, we assume that the exact CAV behavior model is not available. Therefore we introduce SM to reflect some of the generic features of different CAVs (see Assumption \ref{Assume_Common}).  An ideal SM should be calibrated from actual CAV driving data similar to human driving model calibration \cite{ranney1994models}. At the current stage, however, there is very little open CAV data available for public research.  Therefore, we propose to calibrate the SM based on the human driving data, i.e., NDD. This is a reasonable starting point as common behavioral  features of human drivers can serve as a natural baseline for CAV evaluation. Critical scenarios for human drivers are also meaningful testing scenarios for CAVs. In addition, many CAV algorithms are developed by imitating human driving behaviors, e.g., end-to-end learning method \cite{bojarski2016end}\cite{zhang2016query}. A ``human-like'' CAV can also improve safety in a mixed traffic condition, where CAVs and human-driven vehicles coexist on the roadway. A similar concept of ``roadmanship'' was recently proposed for CAV evaluation \cite{Rand2018safetymeasure}. Therefore, it is reasonable to use NDD to calibrate a SM in order to represent the generic features of CAVs.

The maneuver challenge ($P(S|x, \theta)$) measures the probability that a CAV encounters the event of interest in the scenario. The exposure frequency ($P(x|\theta)$) denotes the probability of the scenario occurring on-road. The justifications of this definition are theoretically proved regarding the evaluation accuracy and efficiency in Section \ref{Sec_Theory}. To calculate the criticality, $P(x|\theta)$ can be calculated according to NDD, and $P(S|x,\theta)$ is obtained by simulations of the SM. 

The definition also indicates that scenarios with higher occurrence probability in the real-world and higher maneuver challenge should have higher priority for CAV evaluation. Note although most of critical scenarios are rare, a portion of scenarios occur more frequently than others by orders of magnitude, e.g., $10^{-6}$ versus $10^{-9}$. This is fundamentally different from most existing studies, which usually overvalue the worst-case scenarios \cite{jung2007worst}\cite{zhao2018accelerated}. Taking an extreme example for conceptual explanation, the scenario that a meteor hitting a car is extremely dangerous but we cannot evaluate the performances of CAVs based on testing results from these extremely low frequent scenarios.

\subsection{Critical Scenario Searching}
The next problem is how to efficiently search the set of critical scenarios. The basic idea is to find local critical scenarios by optimization methods and then search their neighborhood scenarios. However, directly using the criticality function as the objective function is problematic. As discussed in Assumption \ref{Assume_Rare}, most scenarios are uncritical with zero criticality and zero gradient of criticality. Therefore, the criticality function provides little information of searching direction for critical scenarios. The optimization process degrades to a random sampling process, which is inefficient for complex scenarios. To address this issue, an auxiliary objective function is designed to guide searching directions. With the auxiliary objective function, the multi-start optimization method is applied to search the local critical scenarios, and the seed-fill method is applied to search neighborhood critical scenarios. The critical scenario searching method is summarized in Algorithm   \ref{alg_cri_sce_search}.

\begin{algorithm}
	\caption{Algorithm of critical scenario searching.}
	\label{alg_cri_sce_search}
	\LinesNumbered 
	\KwIn{Criticality function $V(x|\theta), x\in\B{X}$;}
	\KwOut{A library of critical scenarios $\Phi$;}
	{\bfseries Step 1}: Design an auxiliary objective function $J(x)$. \\
	{\bfseries Step 2}: Solve a multi-start optimization problem. Minimize $J(x)$, $x\in\B{X}$, with different initial starting points respectively, and obtain critical scenarios $x^*_i$, with $V(x^*_i|\theta)>\gamma$, $1\le i\le n^0$, where $n^0$ is the number of obtained local critical scenarios. The threshold $\gamma$ is obtained by Corollary \ref{Theo_threshold}.\\
	{\bfseries Step 3}: Seed-fill. Expand from the obtained local critical scenarios $x^*_i$, $1\le i\le n^0$, and find the set of critical scenarios, i.e., $\Phi = \{x\in \B{X}: V(x|\theta)>\gamma\}$. 
\end{algorithm}

First, an auxiliary objective function is designed as the combination of maneuver challenge and exposure frequency, similar to criticality definition. An example of the auxiliary objective function of the cut-in case for safety evaluation is shown as 
\begin{eqnarray}
\label{eq_Obj}
\min_x J(x) = \min_x \left(  mnpETTC(x) + w \times d(x, \Omega)  \right),
\end{eqnarray}
where $x = [R, \dot{R}]$ denotes the range and range rate at the cut-in moment. The first term is the minimal normalized positive enhanced time-to-collision (mnpETTC) during testing, which measures the danger level (i.e., maneuver challenge) of scenario $x$. The value of ETTC is calculated based on a surrogate car-following model as \cite{chen2016comparison} 
\begin{eqnarray}
\label{eq_ETTC}
ETTC(t) = \frac{-\dot{R}(t)-\sqrt{\dot{R}^2(t)-2u_r(t)R(t)}}{u_r(t)},
\end{eqnarray}
where $u_r$ is the relative acceleration. The minimal positive ETTC measures the most dangerous scene of a testing scenario. To make the metric comparable with exposure frequency, a normalization factor is applied. 
The second term is a normalized distance between the scenario and a high exposure frequency zone (i.e., $\Omega$) in NDD (e.g., 95\% percentile), in order to measure the exposure frequency of the scenario. $w$ is a weight parameter to balance the two terms. Because the auxiliary objective function is designed to approximate searching directions only, certain roughness of the designed function (e.g., caused by the value of $w$) is acceptable.

Second, a commonly used multi-start optimization method is applied to obtain a number of local critical scenarios. Specifically, multiple initial points are generated by space-filling methods (e.g., random sampling). After solving the optimization problem from each initial point as 
\begin{eqnarray}
\min_x J(x), x\in \B{X},
\end{eqnarray}
local critical scenarios are obtained, i.e., $x^*_i$, with $V(x^*_i|\theta)>\gamma$, $1\le i\le n^0$, where $n^0$ is the number of obtained local critical scenarios. The threshold $\gamma$ of critical scenarios is theoretically analyzed in Section \ref{Sec_Theory}. The number of initial points increases with the dimensions of the decision variables. Fortunately, the dimension of the decision variables can be greatly reduced by exploiting their specific structures, e.g., Markov property, and the searching method can be enhanced by RL techniques (see Part II \cite{feng2019testing} for examples).

Third, using the local critical scenarios as starting points, other critical scenarios are expanded by the seed-fill method. Seed-fill, also called flood-fill, is a basic method in computer graphics \cite{nosal2008flood} that determines the area connected to a given node in multi-dimensional arrays. The key idea is to exhaustively explore the critical points of unexplored space from the starting point outwards rather than all of the space \cite{kalisiak2006rrt}. The criticality function instead of the auxiliary objective function is used in this step, and the set of critical scenarios is defined as $\Phi = \{x\in \B{X}: V(x|\theta)>\gamma\}$.

\section{CAV Evaluation with the Library}
To test a CAV with the generated scenario library, three steps are involved including obtaining testing scenarios by sampling from the library, conducting CAV test with specified scenarios, and estimating performance indices from the testing results. An illustration of the CAV evaluation process is shown in Fig. \ref{fig_EvaMethod}.

\begin{figure}[h!]
	\centering
	\includegraphics[width=0.5\textwidth]{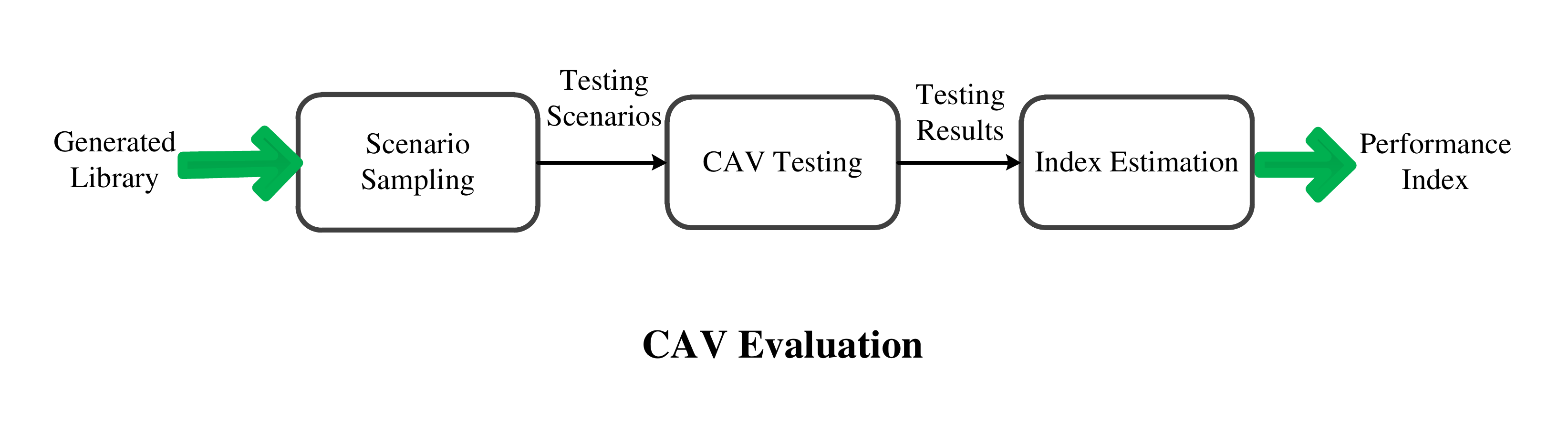}
	\caption{An illustration of the proposed method for the CAV evaluation process.} 
	\label{fig_EvaMethod}
\end{figure}

\subsection{Scenario Sampling}
The first step is to sample testing scenarios with a balance of exploitation and exploration. Recall that critical scenarios are obtained based on a SM, which usually has dissimilarity compared with the test CAV. Therefore, the generated library may miss some critical scenarios when testing a specific CAV. To address this issue, besides sampling scenarios from the library according to their criticality values (i.e., exploitation), the scenarios outside the library are also sampled with a small probability (i.e., exploration).

To better understand the trade-off between the exploitation and exploration, we compare the greedy sampling policy and $\epsilon$-greedy sampling policy. The greedy sampling policy \textbf{\emph{greedily}} exploits the scenarios in the library. By this policy, all testing scenarios are sampled based on the normalized criticality values. The $\epsilon$-greedy sampling behaves \textbf{\emph{greedily}} most of the time, but with small probability $\epsilon >0$, it selects scenarios randomly outside the library with equal probability (i.e., \textbf{\emph{exploration}}). This simple yet efficient method is commonly used for balancing exploitation and exploration \cite{sutton2011reinforcement}.

The testing probability distributions of scenarios with the two policies are derived as
\begin{eqnarray}
\label{eq_Prob_Sampling_1}
&\bar{P}_1(x_i|\theta) = \left\{
\begin{array}{ll}
V(x_i|\theta) / W, &x_i \in \Phi\\
0, &x_i \in \B{X} \backslash \Phi
\end{array}
\right. \\
\label{eq_Prob_Sampling_2}
&\bar{P}_2(x_i|\theta) = \left\{
\begin{array}{ll}
(1-\epsilon)V(x_i|\theta)/W, &x_i \in \Phi\\
\epsilon / (N(\B{X}) - N(\Phi)), &x_i \in \B{X} \backslash \Phi
\end{array}
\right.
\end{eqnarray}
respectively, where $N(\B{X})$ denotes the total number of feasible scenarios, and $W$ is a normalization factor as
\begin{eqnarray}
\label{eq_w}
W = \sum_{x_i \in \Phi} V(x_i|\theta).
\end{eqnarray}
The selection of $\epsilon$ is theoretically analyzed in Section \ref{Sec_Theory}.

From the perspective of importance sampling, the testing probability distributions in Eq. (\ref{eq_Prob_Sampling_1}-\ref{eq_Prob_Sampling_2}) essentially construct the importance function $q(x)$ in Eq. (\ref{eq_Index}). By involving the domain knowledge of CAVs and NDD, this construction method outperforms the general methods of importance sampling techniques (e.g., Cross Entropy method \cite{de2005tutorial}\cite{owen2013monte}). It can be applied for both low- and high- dimensional scenarios (see Part II \cite{feng2019testing}) and provides a feasible solution to progressively improve the importance function (see Theorem \ref{Theo_var}).

\subsection{CAV Testing}
The second step is to test the CAV with sampled scenarios by simulation or closed test facility. The initial conditions and maneuvers of BVs are determined by the sampled testing scenarios. The testing can be repeated easily by sampling different scenarios from the library. The total number of testing is determined by the required evaluation precision and confidence level \cite{zhao2017accelerated}\cite{wasserman2013all}\cite{ross2017introductory}. For example, at a confidence level $100(1-\alpha)\%$, to ensure the relative half-width of the estimation error is smaller than a predefined constant $\beta$, the number of tests needs to be larger than
\begin{eqnarray}
\label{eq_testRequire}
\frac{z_{\alpha}^2 }{\beta^2 \mu^2} \sigma^2,
\end{eqnarray}
where $z_{\alpha}$ is a constant,  and $\sigma, \mu= P(A|\theta)$ can be estimated by the variance and expectation of the testing results.



\subsection{Performance Index Estimating}
After the testing results are collected, the third step is to estimate the performance index value.
Substituting the constructed importance function into Eq. (\ref{eq_Index}), the index value can be estimated as
\begin{eqnarray}
\label{eq_Id_Em}
\hat{P}(A|\theta) \overset{\rm def}{=}  \frac{1}{n} \sum_{i=1}^{n} \frac{P(x_i|\theta)  }{\bar{P}(x_i|\theta)}P(A|x_i, \theta) ,
\end{eqnarray}
where $n$ denotes the total number of the sampled testing scenarios, $P(x_i|\theta)$ denotes the exposure frequency estimated from NDD, $\bar{P}(x_i|\theta)$ denotes the importance function, i.e., either $\bar{P}_1(x_i|\theta)$ or $\bar{P}_2(x_i|\theta)$ depending on the choice of the sampling policy, and $P(A|x_i, \theta)$ is estimated by the testing results.  The unbiasedness of Eq. (\ref{eq_Id_Em}) is proved in Theorem \ref{Theo_accuracy}.

\section{Theoretical Analysis}
\label{Sec_Theory}
In this section, the accuracy and efficiency of the proposed methods are validated by theoretical analysis, and choices of hyper-parameters, i.e., the threshold of critical scenarios and $\epsilon$, are discussed.

To simplify the notations, we omit the pre-determined parameters $\theta$ and define the following notations as
\begin{eqnarray}
\label{eq_notations}
&&f_A(x) = P(A|x, \theta),\nonumber \\
&&f_S(x) = P(S|x, \theta), \nonumber\\
&&p(x) = P(x|\theta), \nonumber\\
&&q_1(x) = {\bar{P}_1(x|\theta)}, \nonumber\\
&&q_2(x) = {\bar{P}_2(x|\theta)}, \\
&&\mu = P(A|\theta), \nonumber\\
&&\mu_S = P(S|\theta), \nonumber\\
&&\hat{\mu} = \hat{P}(A|\theta), \nonumber\\
&&W = \sum_{x_i \in \Phi} P(S|x_i, \varepsilon)P(x_i|\varepsilon). \nonumber
\end{eqnarray}

\subsection{Accuracy Analysis}
In this subsection, we prove that the proposed method can obtain unbiased (i.e., accurate) index estimation with $\epsilon$-greedy sampling policy. For greedy sampling policy, an additional condition is required for the unbiasedness.  
\begin{myTheo}
	\label{Theo_accuracy}
	The proposed evaluation method can obtain the unbiased performance index estimation, namely
	\begin{eqnarray}
	\label{eq_theo_accuracy}
	E(\hat{\mu}) = \mu,
	\end{eqnarray}
	under one of the following conditions:
	
	(1) with greedy sampling policy and $f_A(x) = 0, \forall x_i  \in \B{X} \backslash \Phi$;
	
	(2) with $\epsilon$-greedy sampling policy.
	
\end{myTheo}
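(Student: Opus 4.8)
The plan is to apply the standard importance-sampling unbiasedness identity to the estimator in Eq.~(\ref{eq_Id_Em}), being careful about the support of each testing distribution. Since the $n$ sampled scenarios $x_1,\dots,x_n$ are treated as i.i.d.\ draws from the importance function $\bar P(x|\theta)$ (written generically as $q(x)$ in Eq.~(\ref{eq_Index}), and as $q_1,q_2$ in the notation of Eq.~(\ref{eq_notations})), linearity of expectation collapses the sample average, so $E(\hat\mu)$ equals the expectation of a single summand $E\!\left[\frac{p(X)}{q(X)}f_A(X)\right]$ with $X\sim q$. First I would expand this expectation over the finite feasible set and perform the change of measure:
\begin{eqnarray}
E(\hat\mu) &=& \sum_{x\in\B{X}:\,q(x)>0} \frac{p(x)}{q(x)}f_A(x)\,q(x) \nonumber \\
&=& \sum_{x\in\B{X}:\,q(x)>0} p(x)\,f_A(x). \nonumber
\end{eqnarray}
The whole argument then reduces to comparing this restricted sum with $\mu=\sum_{x\in\B{X}}p(x)f_A(x)$, i.e.\ to controlling the contribution of the scenarios on which the chosen testing distribution vanishes.

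For the $\epsilon$-greedy case (2), I would read off from Eq.~(\ref{eq_Prob_Sampling_2}) that $q_2(x)>0$ for every $x\in\B{X}$: inside $\Phi$ we have $q_2(x)=(1-\epsilon)V(x|\theta)/W$ with $V(x|\theta)>\gamma>0$ and $0<\epsilon<1$, while outside $\Phi$ we have $q_2(x)=\epsilon/(N(\B{X})-N(\Phi))>0$. Hence no scenario is excluded from the restricted sum, it equals $\mu$ directly, and unbiasedness follows with no extra hypothesis.

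For the greedy case (1), Eq.~(\ref{eq_Prob_Sampling_1}) gives $q_1(x)=0$ on $\B{X}\backslash\Phi$, so the restricted sum ranges only over $\Phi$ and
\begin{eqnarray}
E(\hat\mu) &=& \sum_{x\in\Phi} p(x)f_A(x) \nonumber \\
&=& \mu - \sum_{x\in\B{X}\backslash\Phi} p(x)f_A(x). \nonumber
\end{eqnarray}
The additional hypothesis $f_A(x)=0$ for all $x\in\B{X}\backslash\Phi$ makes the trailing sum vanish, yielding $E(\hat\mu)=\mu$. I would also remark that this hypothesis is tight: since $p(x)\ge 0$ and $f_A(x)\ge 0$, the residual sum is zero \emph{iff} $f_A(x)=0$ at every $x\in\B{X}\backslash\Phi$ with $p(x)>0$, so assuming $f_A\equiv 0$ off $\Phi$ is the clean sufficient condition one wants.

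The computation itself is routine; the only real care lies in the support bookkeeping. The main point to get right is verifying $q_2>0$ everywhere for the $\epsilon$-greedy policy, which silently relies on $\gamma>0$ (keeping the criticality strictly positive on $\Phi$) and on $0<\epsilon<1$, and, in the greedy case, recognizing that the stated hypothesis is precisely the condition that annihilates the out-of-library residual rather than an incidental assumption. A secondary subtlety worth flagging is that $E(\hat\mu)$ here is the expectation of the estimator using the true $f_A$ and $p$; in practice $P(A|x_i,\theta)$ is itself obtained from test outcomes, but treating it as the conditional event probability is consistent with the Monte Carlo justification already invoked for Eq.~(\ref{eq_Index}).
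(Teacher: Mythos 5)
Your proposal is correct, and its core is the same change-of-measure identity the paper uses; but your execution differs from the paper's in a way worth recording. The paper's proof of condition (2) starts from $\mu$, inserts $\bar P_2$ by the law of total probability, identifies the resulting sum with the estimator via the ``Monte Carlo principle,'' and then appeals to the Central Limit Theorem to conclude that \emph{when $n$ is large} $\hat\mu$ is approximately normal with mean $\mu$. Your route --- linearity of expectation plus the exact expansion
\begin{eqnarray}
E(\hat\mu) = \sum_{x\in\B{X}:\,q(x)>0} \frac{p(x)}{q(x)}f_A(x)\,q(x) = \sum_{x\in\B{X}:\,q(x)>0} p(x)f_A(x) \nonumber
\end{eqnarray}
--- is both more elementary and strictly stronger: it gives $E(\hat\mu)=\mu$ exactly for every $n$, whereas the paper's CLT detour is unnecessary for unbiasedness (the CLT describes the distribution around the mean; it is linearity that pins the mean down) and, read literally, only yields an asymptotic statement. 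Likewise, for condition (1) the paper simply asserts that the feasible set ``can be changed from $\B{X}$ to $\Phi$ without loss of accuracy'' and then defers to the condition-(2) argument; your explicit computation of the residual $\sum_{x\in\B{X}\backslash\Phi}p(x)f_A(x)$, together with the observation that the hypothesis $f_A\equiv 0$ off $\Phi$ is exactly (and tightly, modulo scenarios with $p(x)=0$) what annihilates it, makes precise what the paper leaves implicit. Your support bookkeeping --- $q_2>0$ everywhere because $V>\gamma\ge 0$ on $\Phi$ and $\epsilon>0$ off $\Phi$, versus $q_1=0$ off $\Phi$ --- is also the cleanest way to see why the two policies need different hypotheses, which is the real content of the theorem.
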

\begin{proof}
	We first prove the theorem under the condition (2). 
	By the law of total probability, we obtain the right term of Eq. (\ref{eq_theo_accuracy}) as
	\begin{eqnarray}
	\mu = P(A|\theta) = \sum_{x_i \in \B{X}} P(A|x_i, \theta) P(x_i|\theta). \nonumber
	\end{eqnarray}
	Introducing the sampling probability $\bar{P}_2(x_i|\theta)$ as Eq. (\ref{eq_Prob_Sampling_2}), we obtain
	\begin{eqnarray}
	P(A|\theta) = \sum_{x_i \in \B{X}} \frac{P(A|x_i, \theta) P(x_i|\theta)}{\bar{P}_2(x_i|\theta)} \bar{P}_2(x_i|\theta). \nonumber
	\end{eqnarray}
	By Monte Carlo principle \cite{hammersley1964general}, if we sample $x_i \sim \bar{P}_2(x_i|\theta)$ for $n$ times, we have the estimation as
	\begin{eqnarray}
	\hat{\mu} = \hat{P}(A|\theta) = \frac{1}{n} \sum_{i=1}^{n} \frac{P(A|x_i, \theta)  P(x_i|\theta)  }{\bar{P}_2(x_i|\theta)}, \nonumber
	\end{eqnarray}
	as shown in Eq. (\ref{eq_Id_Em}). 
	As $\bar{P}_2(x_i|\theta)>0$ for all scenarios and the Central Limit Theorem \cite{rosenblatt1956central}, when $n$ is large, $\hat{P}(A|\theta)$ follows approximately the normal distribution with the mean
	\begin{eqnarray}
	E(\hat{\mu}) = \mu, \nonumber
	\end{eqnarray}
	which concludes the theorem under condition (2).
	
	For the theorem under condition (1), we have
	\begin{eqnarray}
	P(A|x_i, \theta) = 0, \forall x_i \in \B{X} \backslash \Phi \nonumber\\
	\bar{P}_1(x_i|\theta) = 0, \forall x_i \in \B{X} \backslash \Phi \nonumber
	\end{eqnarray}
	which indicates all scenarios outside $\Phi$ are uncritical. Therefore, the feasible set of decision variables can be changed from $\B{X}$ to $\Phi$, without loss of accuracy. Then, similar as the proof of the theorem under condition (2), the theorem under condition (1) can be proved.
\end{proof}

\begin{myRemark}
	The condition $f_A(x_i)=P(A|x_i, \theta)=0, \forall x_i \in \B{X} \backslash \Phi$, indicates that, for the test CAV, all scenarios outside the library satisfy $V(x_i|\theta)=0$. That is the reason why the greedy policy can be applied without loss of accuracy. However, considering the diversity of CAVs, this condition may not hold for real-world applications, so $\epsilon$-greedy policy is suggested.
\end{myRemark}

\subsection{Efficiency Analysis}
In this subsection, we prove that the estimation variance is small and even zero under certain conditions. Because the minimal number of tests is determined by the estimation variance (see Eq. (\ref{eq_testRequire})), the proposed method is proved to be efficient. 
\begin{myTheo}
	\label{Theo_var}
	The estimation variance is zero, i.e., $Var(\hat{\mu}) = \sigma^2 / n = 0$, under the following conditions
	
	(1) with the greedy sampling policy;
	
	(2) $f_A(x) = 0, \forall x_i \notin \Phi$;
	
	(3)  There exists a constant $k>0$ such that $f_A(x)=kf_S(x), \forall x \in \B{X}$.
\end{myTheo}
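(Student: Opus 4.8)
The plan is to exploit the standard variance identity for an importance-sampling estimator and to show that, under the three hypotheses, the weighted integrand collapses to a single constant over the support of the sampling distribution, so that every Monte Carlo draw returns the same value. Concretely, since the $n$ testing scenarios are drawn i.i.d. from the greedy policy $q_1$, I would first write
\begin{eqnarray}
Var(\hat{\mu}) = \frac{1}{n} Var_{q_1}\!\left( \frac{p(X) f_A(X)}{q_1(X)} \right), \quad X \sim q_1. \nonumber
\end{eqnarray}
Because a random variable with finite variance has $Var = 0$ exactly when it is almost surely constant, it suffices to prove that the weighted integrand $p(x) f_A(x)/q_1(x)$ takes the same value at every $x$ in the support of $q_1$.

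Next I would pin down that support. Under condition (1), Eq.~(\ref{eq_Prob_Sampling_1}) gives $q_1(x) = V(x|\theta)/W$ on $\Phi$ and $q_1(x)=0$ on $\B{X}\setminus\Phi$, so all sampled scenarios lie in $\Phi$. By the criticality definition $V(x|\theta) = f_S(x) p(x)$, and on $\Phi$ we have $V(x|\theta) > \gamma > 0$, which guarantees $f_S(x) > 0$ and $p(x) > 0$ there, making the ratio well defined. Substituting $q_1(x) = f_S(x) p(x)/W$ reduces the weighted integrand to $W f_A(x)/f_S(x)$ for every $x \in \Phi$. Condition (3), $f_A(x) = k f_S(x)$, then makes this equal to the constant $kW$, independent of $x$. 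Hence every sampled term equals $kW$, the empirical mean is deterministically $kW$, and $Var(\hat{\mu}) = 0$.

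Finally I would close the loop on the value itself using condition (2). Since $f_A$ vanishes outside $\Phi$,
\begin{eqnarray}
\mu = \sum_{x \in \B{X}} f_A(x) p(x) = k \sum_{x \in \Phi} f_S(x) p(x) = kW, \nonumber
\end{eqnarray}
so the constant the estimator concentrates on is precisely $\mu$; equivalently, condition (2) is exactly the hypothesis of Theorem~\ref{Theo_accuracy}(1), which certifies unbiasedness. Thus $\hat{\mu} = \mu = kW$ with no dispersion, and the role of each assumption is explicit: (1) fixes the sampling support to $\Phi$, (3) flattens the weighted integrand to a constant, and (2) forces that constant to coincide with the true index.

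The main obstacle I expect is not the algebra but the careful justification of the ``zero variance iff constant'' step in the discrete, finite-sample setting: I must verify that $q_1$ places positive probability on precisely the scenarios of $\Phi$ (so no draw ever lands where the ratio is undefined or takes a different value), that $\gamma>0$ genuinely rules out $f_S(x)=0$ on $\Phi$, and that conditions (2) and (3) are jointly consistent with the unbiasedness of Theorem~\ref{Theo_accuracy}. Once these support and positivity details are handled, the collapse to $kW$ is immediate.
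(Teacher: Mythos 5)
Your proof is correct, and it rests on the same underlying zero-variance importance-sampling fact as the paper's proof: under the stated conditions the greedy sampling distribution satisfies $q_1(x) = f_S(x)p(x)/W \propto f_A(x)p(x)$ on $\Phi$. The organization, however, is genuinely different. The paper works entirely inside the variance formula: it starts from $\sigma^2 = \sum_{x_i\in\Phi}\left(f_A(x_i)p(x_i)/q_1(x_i)\right)^2 q_1(x_i) - \mu^2$ (Eq. (\ref{eq_var_2})), completes the square to get $\sum_{x_i\in\Phi}\frac{p^2(x_i)}{q_1(x_i)}\left(f_A(x_i)-\mu\, q_1(x_i)/p(x_i)\right)^2$, substitutes $q_1 = f_S p/W$, and then uses conditions (2) and (3) jointly to show $\mu/W = k$ (Eq. (\ref{eq_k})), so each squared term becomes $(f_A - kf_S)^2 = 0$. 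You instead argue that the single-draw integrand $p(X)f_A(X)/q_1(X)$ equals the constant $kW$ everywhere on the support of $q_1$, which uses only conditions (1) and (3); condition (2) enters only afterwards, to identify that constant with $\mu$. This decoupling is the real payoff of your route: it makes explicit that (1) and (3) alone already force $Var(\hat{\mu})=0$ (the estimator is deterministic), while (2) is what makes the deterministic value coincide with the true index --- i.e., (2) is an accuracy condition rather than an efficiency condition, consistent with its role in Theorem \ref{Theo_accuracy}. The paper's presentation intertwines the two, since its opening variance formula $E[Y^2]-\mu^2$ already presupposes $E[Y]=\mu$, i.e., unbiasedness, which itself requires conditions (1)--(2). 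Your support and positivity concerns are also resolved correctly: on $\Phi$ we have $V(x|\theta)>\gamma\ge 0$, hence $f_S(x)p(x)>0$, so the ratio is well defined and no greedy draw ever lands where $q_1$ vanishes.
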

\begin{proof}
	According to the Monte Carlo method with importance sampling \cite{owen2013monte}, we obtain	the variance of the estimation as
	\begin{eqnarray}
	\label{eq_var_2}
	\sigma^2 &&= \sum_{x_i \in \Phi} \left(  \frac{f_A(x_i)p(x_i)}{q_1(x_i)} \right)^2 q_1(x_i) - \mu^2, \nonumber\\
	&&= \sum_{x_i \in \Phi}  \frac{ \left(     f_A(x_i)p(x_i) - \mu q_1(x_i) \right)^2  }{{q_1(x_i)}}, \nonumber \\
	&&=\sum_{x_i \in \Phi} \frac{p^2(x_i)}{q_1(x_i)} \left( f_A(x_i) - \mu \frac{q_1(x_i)}{p(x_i)}   \right)^2,
	\end{eqnarray}
	where the second equivalence is obtained by
	\begin{eqnarray}
	\sum_{x_i \in \Phi} q_1(x_i) = 1. \nonumber
	\end{eqnarray}
	By condition (2) and Eq. (\ref{eq_Value}), we have
	\begin{eqnarray}
	\label{eq_q}
	q_1(x_i) &&=P(S|\theta, x_i) P(x_i|\theta) / W, \nonumber \\
	&&=f_S(x_i) p(x_i) / W.
	\end{eqnarray}
	Substituting Eq. (\ref{eq_q}) into Eq. (\ref{eq_var_2}), we obtain
	\begin{eqnarray}
	\label{eq_var_3}
	&&\sigma^2 = \sum_{x_i \in \Phi}\frac{p^2(x_i)}{q_1(x_i)} \nonumber \\
	&& \times \left( f_A(x_i) - \frac{\mu}{W} f_S(x_i)  \right)^2.
	\end{eqnarray}
	Moreover, by the conditions (1-3), we have
	\begin{eqnarray}
	\label{eq_k}
	\frac{\mu}{W} &&= \frac{P(A|\theta)}{W}, \nonumber\\
	&&= \frac{\sum_{x_i \in \Phi}  P(A|x_i, \theta) P(x_i|\theta)   }{\sum_{x_i \in \Phi} P(S|x_i, \theta) P(x_i|\theta)}, \nonumber\\
	&&=k.
	\end{eqnarray}
	Substituting Eq. (\ref{eq_k}) into Eq. (\ref{eq_var_3}), we obtain
	\begin{eqnarray} 
	Var(\hat{\mu}) = \sigma^2 / n = 0, \nonumber
	\end{eqnarray}
	which concludes the theorem.
\end{proof}

\begin{myRemark}
	\label{Rmk_efficiency}
	As shown in Eq. (\ref{eq_testRequire}), if the estimation variance is zero, the minimal number of tests is one, which is ideal. Theorem \ref{Theo_var} shows strict conditions for the ideal results, which hold only if the SM is exactly the same as the test CAV. Since dissimilarity always exists between the SM and a specific CAV, the conditions are impossible to hold completely. Nevertheless, the theorem indicates that the source of the evaluation variance is the dissimilarity between the SM and the test CAV model (see Eq. (\ref{eq_var_3})). It also demonstrates that the evaluation efficiency can be further improved by mitigating the dissimilarity. Moreover, Theorem \ref{Theo_var} provides a foundation of determining hyper-parameters, i.e., the $\epsilon$ of the $\epsilon$-greedy sampling policy (Corollary \ref{Theo_var_epsilson}) and criticality threshold of critical scenarios (Corollary \ref{Theo_threshold}).
\end{myRemark}

\subsection{Choices of Hyper-parameters}
In this subsection, we provide methods to determine the hyper-parameters, i.e., $\epsilon$ and $\gamma$. 

\begin{myCor}
	\label{Theo_var_epsilson}
	The estimation variance with $\epsilon$-greedy sampling can be separated into two parts
	\begin{eqnarray}
	\sigma^2  &&= \sum_{x_i \notin \Phi} \frac{p^2(x_i)}{q_2(x_i)} \left( f_A(x_i) - \mu \frac{q_2(x_i)}{p(x_i)}   \right)^2 \nonumber \\
	&&+ \sum_{x_i \in \Phi}\frac{p^2(x_i)}{q_2(x_i)} \left( f_A(x_i) - \mu \frac{q_2(x_i)}{p(x_i)}   \right)^2, \nonumber
	\end{eqnarray}
	and the latter part is zero if  $\epsilon$ is chosen as
	\begin{eqnarray}
	\label{eq_epsilon}
	&&\epsilon = 1 - W / \mu_S, 
	\end{eqnarray}
	under the condition (3) in Theorem \ref{Theo_var}.
\end{myCor}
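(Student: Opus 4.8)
The plan is to reuse the general importance-sampling variance identity already derived in the proof of Theorem~\ref{Theo_var}, but now instantiated with the $\epsilon$-greedy importance function $q_2$ instead of the greedy one. For any importance function $q$ that is positive wherever $p$ is, completing the square in $\sum_{x} (f_A(x)p(x))^2/q(x) - \mu^2$ — using $\sum_x q(x)=1$ and $\mu=\sum_x f_A(x)p(x)$ exactly as in Eq.~(\ref{eq_var_2}) — gives
\begin{eqnarray}
\sigma^2 = \sum_{x_i \in \B{X}} \frac{p^2(x_i)}{q_2(x_i)} \left( f_A(x_i) - \mu \frac{q_2(x_i)}{p(x_i)} \right)^2. \nonumber
\end{eqnarray}
Because $q_2$ is strictly positive on all of $\B{X}$ (unlike $q_1$, which is supported only on $\Phi$), every summand is well defined, and the claimed separation follows immediately by partitioning the index set into $\B{X}\backslash\Phi$ and $\Phi$.

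To show the $\Phi$-part vanishes, I would work term by term. For $x_i \in \Phi$, Eq.~(\ref{eq_Prob_Sampling_2}) together with the criticality definition in Eq.~(\ref{eq_Value}) gives $q_2(x_i) = (1-\epsilon) f_S(x_i) p(x_i) / W$, so that $q_2(x_i)/p(x_i) = (1-\epsilon) f_S(x_i)/W$. Substituting this and condition~(3), $f_A(x_i)=k f_S(x_i)$, into the squared factor yields
\begin{eqnarray}
f_A(x_i) - \mu \frac{q_2(x_i)}{p(x_i)} = f_S(x_i) \left( k - \frac{\mu(1-\epsilon)}{W} \right), \nonumber
\end{eqnarray}
which is a fixed scalar multiple of $f_S(x_i)$ and therefore vanishes for every $x_i \in \Phi$ precisely when $1-\epsilon = kW/\mu$.

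The one place where care is needed — and the step that departs from Theorem~\ref{Theo_var} — is evaluating $k$ \emph{without} assuming $f_A=0$ outside $\Phi$. The relation $\mu/W = k$ used in the greedy case is no longer valid here, since $\mu$ now collects probability mass from $\B{X}\backslash\Phi$ as well. Instead I would apply condition~(3) globally: summing $f_A(x)=k f_S(x)$ against $p(x)$ over all of $\B{X}$ gives $\mu = k\mu_S$, hence $k = \mu/\mu_S$. Inserting this into $1-\epsilon = kW/\mu$ cancels the $\mu$ factors and produces $1-\epsilon = W/\mu_S$, i.e. $\epsilon = 1 - W/\mu_S$, as asserted. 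Nothing is claimed about the first (over $\B{X}\backslash\Phi$) sum, which is exactly the irreducible residual variance stemming from the dissimilarity between the surrogate model and the test CAV noted in Remark~\ref{Rmk_efficiency}.
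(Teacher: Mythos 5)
Your proof is correct and takes essentially the same route as the paper's: the same completed-square variance identity over $\B{X}$, the same partition into $\B{X}\backslash\Phi$ and $\Phi$, the same substitution $q_2(x_i)/p(x_i)=(1-\epsilon)f_S(x_i)/W$ on $\Phi$, and the same identification $k=\mu/\mu_S$ from applying condition (3) globally --- which is exactly what the paper's Eq.~(\ref{eq_k_eps}) computes via the ratio of sums over all of $\B{X}$. Your explicit caution that the greedy-case relation $\mu/W=k$ from Eq.~(\ref{eq_k}) no longer holds is a point the paper glosses over with ``similar to Eq.~(\ref{eq_k})'', but the underlying algebra is identical.
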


\begin{proof}
	Introduction of $\epsilon$ violates the condition (1) in Theorem \ref{Theo_var}. The variance in Eq. (\ref{eq_var_2}) changes as
	\begin{eqnarray}
	\sigma^2 &&= \sum_{x_i \in \B{X}} \frac{p^2(x_i)}{q_2(x_i)} \left( f_A(x_i) - \mu \frac{q_2(x_i)}{p(x_i)}   \right)^2, \nonumber \\
	&&= \sum_{x_i \notin \Phi} \frac{p^2(x_i)}{q_2(x_i)} \left( f_A(x_i) - \mu \frac{q_2(x_i)}{p(x_i)}   \right)^2 \nonumber \\
	&&+ \sum_{x_i \in \Phi}\frac{p^2(x_i)}{q_2(x_i)} \left( f_A(x_i) - \mu \frac{q_2(x_i)}{p(x_i)}   \right)^2. \nonumber
	\end{eqnarray}
	Denote the latter part as $\sigma^2_{\Phi}$ and we obtain
	\begin{eqnarray}
	\label{eq_var_eps}
	\sigma^2_{\Phi} &&= \sum_{x_i \in \Phi} \frac{p^2(x_i)}{q_2(x_i)} \times \\
	&&\left( f_A(x_i) - \mu \frac{(1-\epsilon)}{W} f_S(x_i)  \right)^2. \nonumber
	\end{eqnarray}
	Similar to Eq. (\ref{eq_k}), substituting Eq. (\ref{eq_epsilon}), we yield
	\begin{eqnarray}
	\label{eq_k_eps}
	\mu\frac{(1-\epsilon)}{W} &&= \frac{P(A|\theta)}{P(S|\theta)}, \nonumber\\
	&&=\frac{\sum_{x_i \in \B{X}}  P(A|x_i, \theta) P(x_i|\theta) }{\sum_{x_i \in \B{X}} P(S|x_i, \theta)P(x_i|\theta)}, \nonumber \\
	&&=k.
	\end{eqnarray}
	Substituting Eq. (\ref{eq_k_eps}) into Eq. (\ref{eq_var_eps}), we obtain
	\begin{eqnarray}
	\sigma^2_{\Phi}= 0, \nonumber
	\end{eqnarray}
	which concludes the theorem.
\end{proof}

\begin{myRemark}
The choice of $\epsilon$ as in Eq. (\ref{eq_epsilon}) will not increase the estimation variance for the scenarios in the library.
\end{myRemark}

\begin{myCor}
	\label{Theo_threshold}
	The estimation variance has an upper bound
	\begin{eqnarray}
	\label{eq_var_bound}
	\sigma^2 < \mu^2 \frac{(m-\epsilon)^2}{\epsilon}, 
	\end{eqnarray}
	if under the same conditions in Corollary 1 and the threshold of critical scenarios is determined as
	\begin{eqnarray}
	\label{eq_threshold}
	\gamma =  \frac{m \mu_S}{N(\B{X})-N(\Phi)},  
	\end{eqnarray}
	where $m\ge1$ is a constant.
\end{myCor}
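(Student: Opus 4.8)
The plan is to build directly on Corollary~\ref{Theo_var_epsilson}. Under exactly the conditions assumed here (condition (3) of Theorem~\ref{Theo_var} together with the choice $\epsilon = 1 - W/\mu_S$), Corollary~\ref{Theo_var_epsilson} already shows that the $\Phi$-part of the variance vanishes, so the whole estimation variance collapses to the exploration sum
\[
\sigma^2 = \sum_{x_i \notin \Phi} \frac{p^2(x_i)}{q_2(x_i)}\left( f_A(x_i) - \mu\,\frac{q_2(x_i)}{p(x_i)} \right)^2 .
\]
I would therefore carry out the entire estimate on this single sum. First I would exploit the flat exploration mass: by Eq.~(\ref{eq_Prob_Sampling_2}), every $x_i \notin \Phi$ has the constant weight $q_2(x_i) = \epsilon/(N(\B{X}) - N(\Phi))$. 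Writing $N_0 = N(\B{X}) - N(\Phi)$ and $a_i = f_A(x_i)\,p(x_i)$, each summand simplifies to $(a_i - \mu q_2(x_i))^2/q_2(x_i)$, reducing the task to controlling $a_i$ against the constant deviation $\mu q_2(x_i) = \mu\epsilon/N_0$.

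The crux is a uniform bound on $a_i$. Since $x_i \notin \Phi$ means $V(x_i|\theta) = f_S(x_i)\,p(x_i) \le \gamma$, and since condition (3) gives $f_A = k f_S$ with $k = \mu/\mu_S$ (read off from Eq.~(\ref{eq_k_eps}) in the proof of Corollary~\ref{Theo_var_epsilson}), I obtain $a_i = k\,V(x_i|\theta) \le k\gamma$ via the criticality definition Eq.~(\ref{eq_Value}). Substituting the prescribed threshold $\gamma = m\mu_S/N_0$ collapses this to $a_i \le m\mu/N_0$. Thus every $a_i$ lies in $[0,\, m\mu/N_0]$ while the center is $\mu\epsilon/N_0$, and using $m\ge 1$ the squared deviation is maximized at the upper endpoint, giving the termwise bound $(a_i - \mu q_2(x_i))^2 \le (\mu(m-\epsilon)/N_0)^2$.

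Finally I would substitute this back: each summand is at most $(\mu(m-\epsilon)/N_0)^2/(\epsilon/N_0)$, and summing the $N_0$ identical bounds cancels one factor of $N_0$ to yield $\sigma^2 \le \mu^2(m-\epsilon)^2/\epsilon$, as claimed. The strict inequality then follows because $\Phi$ collects exactly the scenarios with $V(x|\theta) > \gamma$, so at least some exploration scenarios satisfy $V(x_i|\theta) < \gamma$ strictly and do not attain the endpoint bound, forcing the aggregated inequality to be strict.

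I expect the main obstacle to be the termwise bound on $(a_i - \mu q_2(x_i))^2$: one must verify that with $m\ge 1$ the upper endpoint $m\mu/N_0$ dominates the lower-endpoint contribution $\mu\epsilon/N_0$ in the squared deviation (i.e.\ that the worst case is $a_i = k\gamma$ rather than $a_i = 0$), and then track the cancellation of the $N_0$ factors so that the library size drops out of the final bound. Everything else is bookkeeping resting on Corollary~\ref{Theo_var_epsilson}, the constant $k = \mu/\mu_S$, and the definition of criticality.
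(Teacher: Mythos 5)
Your proposal follows essentially the same route as the paper's proof: reduce the variance to the exploration sum via Corollary~\ref{Theo_var_epsilson}, bound $f_A(x_i)p(x_i)$ outside $\Phi$ by $k\gamma = m\mu/(N(\B{X})-N(\Phi))$ using condition (3) with $k=\mu/\mu_S$, exploit the flat exploration weight $q_2(x_i)=\epsilon/(N(\B{X})-N(\Phi))$, and maximize each quadratic summand at the upper endpoint before summing the identical bounds (the paper phrases this with $P(x_i|A,\theta)<m/(N(\B{X})-N(\Phi))$, i.e.\ your $a_i$ divided by $\mu$). Even the caveat you flag---that the upper endpoint dominates the deviation only when $m\ge 2\epsilon$, which $m\ge 1\ge\epsilon$ alone does not force---is present and equally glossed over in the paper, which simply invokes ``$m \ge 1 \ge \epsilon$ and properties of the quadratic function,'' while the paper sidesteps your slightly shaky strictness argument by taking $\Phi=\{x:V(x|\theta)\ge\gamma\}$ so that $V(x_i|\theta)<\gamma$ holds strictly off the library.
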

\begin{proof}
	Note that $\Phi = \left\{x \in \B{X}: V(x|\theta)\ge\gamma\right\}$. By Eq. (\ref{eq_threshold}) and the condition (3) in Theorem \ref{Theo_var}, we obtain that for $x_i \notin \Phi$,
	\begin{eqnarray}
	P(x_i|A, \theta) &&= \frac{f_A(x_i)p(x_i)}{\mu}, \nonumber \\
	&&= \frac{kf_S(x_i)p(x_i)}{k \mu_S}, \nonumber \\
	&&= \frac{V(x_i|\theta)}{\mu_S}, \nonumber\\
	&&< \frac{m}{N(\B{X})-N(\Phi)}.
	\end{eqnarray}
	By Theorem 3 and Eq. (\ref{eq_Prob_Sampling_2}), we obtain
	\begin{eqnarray}
	\sigma^2 &&= \sum_{x_i \notin \Phi} \frac{p^2(x_i)}{q_2(x_i)} \left( f_A(x_i) - \mu \frac{q_2(x_i)}{p(x_i)}   \right)^2, \nonumber \\
	&&= \mu^2 \sum_{x_i \notin \Phi} \frac{1}{q_2(x_i)} \left( \frac{ f_A(x_i) p(x_i)}{\mu} - q_2(x_i)  \right)^2, \nonumber \\
	&&= \mu^2 \frac{N(\B{X})-N(\Phi)}{\epsilon} \nonumber \\
	&& \times \sum_{x_i \notin \Phi} \left( P(x_i|A, \theta) - \frac{\epsilon}{N(\B{X})-N(\Phi) } \right)^2.
	\end{eqnarray}
	By applying $m \ge 1 \ge \epsilon$ and properties of the quadratic function, we obtain the upper bound of the variance as
	\begin{eqnarray}
	\sigma^2 < \mu^2 \frac{(m-\epsilon)^2}{\epsilon},
	\end{eqnarray}
	which concludes the theorem.
\end{proof}

\begin{myRemark}
	Eq. (\ref{eq_var_bound}) shows the upper bound of the estimation variance. The choice of constant $m$, however,  has trade-offs, i.e., a larger $m$ decreases the size of the library but increases the upper bound of the estimation variance.  Eq. (\ref{eq_threshold}) shows that the determination of $\gamma$ can only be solved recursively as $ N(\Phi)$ is dependent on $\gamma$.  For practical applications, considering the rareness of critical scenarios, the threshold $\gamma$ can be relaxed as $ m \mu_S / N(\B{X})$.
\end{myRemark}

\section{Conclusions}
In this paper, we propose a unified framework for the testing scenario library generation (TSLG) problem for CAV evaluation. The framework can be used to generate testing scenario libraries for different ODD types,  performance metrics, and CAV models.

In this paper, the criticality of scenarios is defined as a combination of maneuver challenge and exposure frequency.  A multi-start optimization method is applied to search for the critical scenarios. To evaluate the maneuver challenge of scenarios, the surrogate model (SM) of CAVs is introduced, which contains the generic features of CAVs. Theoretical analysis is provided to ensure the accuracy and efficiency of the proposed testing method. It also demonstrates that the evaluation efficiency can be further improved by mitigating the dissimilarity between the SM and CAVs.

While this paper provides general framework and method to the TSLG problem, in Part II of this study  \cite{feng2019testing}, three case studies, including cut-in, car-following, and highway exit, will be investigated to demonstrate the proposed method. 
The proposed method is also enhanced using reinforcement learning technique for high-dimensional testing scenarios.

\appendices
\appendices

\bibliographystyle{IEEEtran}
\bibliography{IEEEexample}

\begin{IEEEbiography}[{\includegraphics[width=1in,height=1.25in,clip,keepaspectratio]{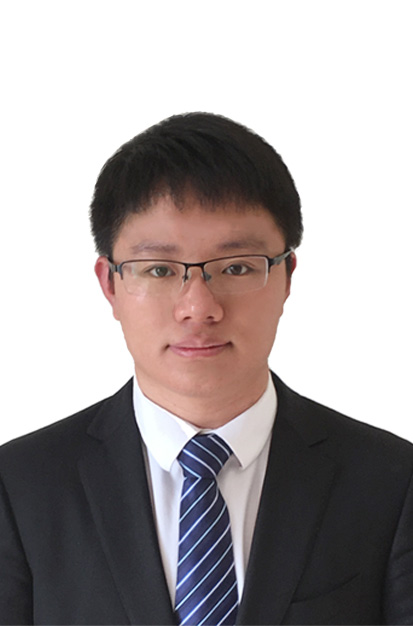}}]{Shuo Feng}
	received the bachelor’s degree and Ph.D. degree in Department of Automation from Tsinghua University, China, in 2014 and 2019. He was also a joint Ph.D. student in Civil and Environmental Engineering in University of Michigan, Ann Arbor, from 2017 to 2019. He is currently a postdoctoral researcher in Civil and Environmental Engineering in University of Michigan, Ann Arbor. His current research interests include connected and automated vehicle evaluation, mixed traffic control, and transportation data analysis.
\end{IEEEbiography}

\begin{IEEEbiography}[{\includegraphics[width=1in,height=1.25in,clip,keepaspectratio]{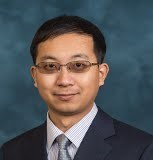}}]{Yiheng Feng}
	is currently an Assistant Research Scientist at University of Michigan Transportation Research Institute. He graduated from the University of Arizona with a Ph.D degree in Systems and Industrial Engineering in 2015. He has a Master degree from the Civil Engineering Department, University of Minnesota, Twin Cities in 2011. He also earned the B.S. and M.E. degree from the Department of Control Science and Engineering, Zhejiang University, Hangzhou, China in 2005 and 2007 respectively. His research interests include traffic signal systems control and security, and connected and automated vehicles testing and evaluation.
\end{IEEEbiography}

\begin{IEEEbiography}[{\includegraphics[width=1in,height=1.25in,clip,keepaspectratio]{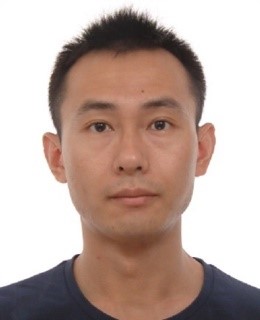}}]{Chunhui Yu}
	received the Ph.D. degree in transportation engineering from the College of Transportation Engineering in Tongji University, Shanghai, China, in 2018. He is currently an assistant researcher in the College of Transportation Engineering in Tongji University. Research interests include traffic signal optimization, vehicle trajectory optimization, and traffic management based on connected and automated vehicles.
\end{IEEEbiography}

\begin{IEEEbiography}[{\includegraphics[width=1in,height=1.25in,clip,keepaspectratio]{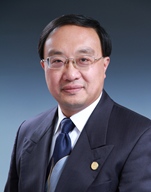}}]{Yi  Zhang}
	received   the   BS   degree   in1986   and   MS   degree   in   1988  from Tsinghua University in China, and earned  the  Ph.D.  degree  in  1995  from  the University  of  Strathclyde  in  UK.  He  is a   professor   in   the   control   science   and engineering  at  Tsinghua  University  with his  current  research  interests  focusing  on intelligent  transportation  systems. His  active  research  areas include  intelligent  vehicle-infrastructure  cooperative  systems, analysis  of  urban  transportation  systems,  urban  road  network management,  traffic  data  fusion  and  dissemination,  and  urban traffic control and management.   His research fields also cover the advanced control theory and applications, advanced detection and measurement, systems engineering, etc.
\end{IEEEbiography}

\begin{IEEEbiography}[{\includegraphics[width=1in,height=1.25in,clip,keepaspectratio]{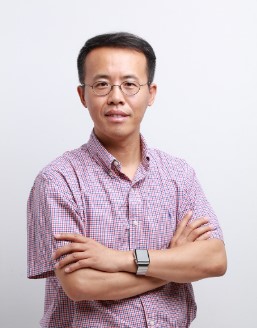}}]{Henry X. Liu}
	is a Professor of Civil and Environmental Engineering at the University of Michigan, Ann Arbor and a Research Professor of the University of Michigan Transportation Research Institute. He also directs the USDOT Region 5 Center for Connected and Automated Transportation. Dr. Liu received his Ph.D. degree in Civil and Environmental Engineering from the University of Wisconsin at Madison in 2000 and his Bachelor degree in Automotive Engineering from Tsinghua University in 1993. Dr. Liu's research interests focus on transportation network monitoring, modeling, and control, as well as mobility and safety applications with connected and automated vehicles. On these topics, he has published more than 100 refereed journal articles. Dr. Liu is the managing editor of Journal of Intelligent Transportation Systems and an associate editor of Transportation Research Part C. 
\end{IEEEbiography}

\end{document}